\let\@twosidetrue\@twosidefalse
\let\@mparswitchtrue\@mparswitchfalse
\tikzset{snake it/.style={decorate, decoration=snake}}
\definecolor{MyColor}{RGB}{197,0,205}
\newtheorem{pr}{Problem}
\newcommand{\inp}{\textsf{Input: }} 
\newcommand{\ques}{\textsf{Question: }}
\title{The stable marriage problem with ties and restricted edges
\thanks{The authors were supported by the Cooperation of Excellences Grant (KEP-6/2018), the Hungarian Academy of Sciences under its Momentum Programme (LP2016-3/2016), its J\'anos Bolyai Research Fellowship and OTKA grant K128611, the DFG Research Training Group 2434 ``Facets of Complexity'', and COST Action CA16228 European Network for Game Theory.}} 
\author{\'{A}gnes Cseh\inst{1} \and Klaus Heeger\inst{2}}
\institute{
Institute of Economics, Centre for Economic and Regional Studies, Hungarian Academy of Sciences, Hungary, \texttt{cseh.agnes@krtk.mta.hu}
\and
Algorithmics and Computational Complexity, Faculty~IV, TU Berlin, Germany, \texttt{heeger@tu-berlin.de}
}
\begin{document}
\maketitle

\begin{abstract}
    In the stable marriage problem, a set of men and a set of women are given, each of whom has a strictly ordered preference list over the acceptable agents in the opposite class. A matching is called stable if it is not blocked by any pair of agents, who mutually prefer each other to their respective partner. Ties in the preferences allow for three different definitions for a stable matching: weak, strong and super-stability. Besides this, acceptable pairs in the instance can be restricted in their ability of blocking a matching or being part of it, which again generates three categories of restrictions on acceptable pairs. Forced pairs must be in a stable matching, forbidden pairs must not appear in it, and lastly, free pairs cannot block any matching.
    
    Our computational complexity study targets the existence of a stable solution for each of the three stability definitions, in the presence of each of the three types of restricted pairs. We solve all cases that were still open. As a byproduct, we also derive that the maximum size weakly stable matching problem is hard even in very dense graphs, which may be of independent interest.
\end{abstract}

\section{Introduction}

In the classical \emph{stable marriage problem} ({\sc sm})~\cite{GS62}, a bipar\-tite graph is given, where one side symbolizes a set of men~$U$, while the other side symbolizes a set of women~$W$. Man $u$ and woman $w$ are connected by edge $uw$ if they find one another mutually acceptable. In the most basic setting, each participant provides a strictly ordered preference list of the acceptable agents of the opposite gender. An edge $uw$ \emph{blocks} matching $M$ if it is not in $M$, but each of $u$ and $w$ is either unmatched or prefers the other to their respective partner in~$M$. A \emph{stable matching} is a matching not blocked by any edge. From the seminal paper of Gale and Shapley~\cite{GS62}, we know that the existence of such a stable solution is guaranteed and one can be found in linear time.

Several real-world applications~\cite{Bir17} require a relaxation of the strict order to weak order, or, in other words, preference lists with ties, leading to the \emph{stable marriage problem with ties} ({\sc smt})~\cite{Irv94,IMS03,Man02}.
When ties occur, the definition of a blocking edge needs to be revisited. In the literature, three intuitive definitions are used, namely weakly, strongly and super-stable matchings~\cite{Irv94}. According to weak stability, a matching is \emph{weakly blocked} by an edge $uw$ if agents $u$ and $w$ both strictly prefer one another to their partners in the matching. A \emph{strongly blocking} edge is preferred strictly by one end vertex, whereas it is not strictly worse than the matching edge at the other end vertex. A \emph{super-blocking edge} is at least as good as the matching edge for both end vertices in the super-stable case. Super-stable matchings are strongly stable and strongly stable matchings are weakly stable by definition, because weakly blocking edges are strongly blocking, and strongly blocking edges are super-blocking at the same time.

Another classical direction of research is to distinguish some of the edges based on their ability to be part of or to block a matching. Table~\ref{ta:restricted_edges} provides a structured overview of the three sorts of restricted edges that have been defined in earlier papers~\cite{Knu76,DFFS03,FIM07,AIKMP13,Kwa15,CM16}. The mechanism designer has the right to specify three sets of restricted edges: \emph{forced} edges must be in the output matching, \emph{forbidden} edges must not appear in it, and finally, \emph{free} edges cannot block the matching, no matter the preference ordering.

\begin{table*}[]
	\centering
      \begin{adjustbox}{max width=1\textwidth}
		\begin{tabular}{|l| c c c|}
		\hline
			& $uw$ must be in $M$ & $uw$ can be in $M$ & $uw$ must not be in $M$ \\ \hline
			$uw$ can block $M$  & \textbf{forced} & unrestricted& \textbf{forbidden} \\ 
			$uw$ cannot block $M$ & \textbf{forced} & \textbf{free} & irrelevant \\
		\hline
		\end{tabular}
        \end{adjustbox}
    \caption{The three types of restricted edges are marked with bold letters. The columns tells edge $uw$'s role regarding being in a matching, while the rows split cases based on $uw$'s ability to block a matching.}
    \label{ta:restricted_edges}
\end{table*}

In this paper, we combine weakly ordered lists and restricted edges, and determine the computational complexity of finding a stable matching in all cases not solved yet.

\subsection{Literature review}
\label{sec:literature}

We first focus on the known results for the {\sc smt} problem without restricted edges, and then switch to the {\sc sm} problem with edge restrictions. Finally, we list all progress up to our paper in {\sc smt} with restricted edges. 

\paragraph{Ties.} If all edges are unrestricted, a weakly stable matching always exists, because generating any linear extension to each preference list results in a classical \textsc{sm} instance, which admits a solution~\cite{GS62}. This solution remains stable in the original instance as well. On the other hand, strong and super-stable matchings are not guaranteed to exist. However, there are polynomial-time algorithms to output a strongly/su\-per-stable matching or a proof for its nonexistence~\cite{Irv94,Man13}.

\paragraph{Restricted edges.} Dias et al.~\cite{DFFS03} showed that the problem of finding a stable matching in a {\sc sm} instance with forced and forbidden edges or reporting that none exists is solvable in $O(m)$ time, where $m$ is the number of edges in the instance. Approximation algorithms for instances not admitting any stable matching including all forced and avoiding all forbidden edges  
were studied in~\cite{CM16}. The existence of free edges can only enlarge the set of stable solutions, thus a stable matching with free edges always exists. However, in the presence of free edges, a maximum-cardinality stable matching is $\NP$-hard to find~\cite{AIKMP13}. Kwanashie~\cite[Sections 4 and 5]{Kwa15} performed an exhaustive study on various stable matching problems with free edges.  The term ``stable with free edges''~\cite{CF09,FIM11} is equivalent to ``socially stable''~\cite{AIKMP13,Kwa15}.

\paragraph{Ties and restricted edges.} Table~\ref{ta:results} illustrates the known and our new results on problems that arise when ties and restricted edges are combined in an instance. Weakly stable matchings in the presence of forbidden edges were studied by Scott~\cite{Sco05}, where the author shows that deciding whether a matching exists avoiding the set of forbidden edges is $\NP$-complete. A similar hardness result was derived by Manlove et al.~\cite{MIIMM02} for the case of forced edges, even if the instance has a single forced edge. Forced and forbidden edges in super-stable matchings were studied by Fleiner et al.~\cite{FIM07}, who gave a polynomial-time algorithm to decide whether a stable solution exists. Strong stability in the presence of forced and forbidden edges is covered by Kunysz~\cite{Kun18}, who gave a polynomial-time algorithm for the weighted strongly stable matching problem with non-negative edge weights. Since strongly stable matchings are always of the same cardinality~\cite{Man99,IMS03}, a stable solution or a proof for its nonexistence can be found via setting the edge weights to 0 for forbidden edges, 2 for forced edges, and 1 for unrestricted edges.

\subsection{Our contributions}

In Section~\ref{se:forbidden} we prove a stronger result than the hardness proof in~\cite{Sco05} delivers: we show that finding a weakly stable matching in the presence of forbidden edges is $\NP$-complete even if the instance has a single forbidden edge.

As a byproduct, we gain insight into the well-known maximum size weakly stable matching (without any edge restriction). This problem is known to be $\NP$-complete~\cite{IMMM99,MIIMM02}, even if preference lists are of length at most three~\cite{IMO09,MM10}. On the other hand, if the graph is complete, a complete weakly stable matching is guaranteed to exist. It turns out that this completeness is absolutely crucial to keep the problem tract\-able: as we show here, if the graph is a complete bipartite graph missing exactly one edge, then deciding whether a perfect weakly stable matching exists is $\NP$-complete.

We turn to the problem of free edges under strong and super-stability in Section~\ref{se:free}. We show that deciding whether a strongly/super-stable exists when free edges occur in the instance is $\NP$-complete. This hardness is in sharp contrast to the polynomial-time algorithms for the weighted strongly/ super-stable matching problems.
Afterwards, we show that deciding the existence of a strongly or super-stable matching in an instance with free edges is fixed-parameter tractable parameterized by the number of free edges. 

\begin{table*}[]
	\centering
      \begin{adjustbox}{max width=1\textwidth}
		\begin{tabular}{|l|c c c|}
		\hline
			Existence& weak & strong & super \\ \hline
            forbidden & $\NP$-complete~\cite{Sco05} \textbf{{\color{gray!70} even if $|P|=1$} }& $O(nm)$~\cite{Kun18}&$O(m)$~\cite{FIM07}\\
            forced & $\NP$-complete even if $|Q|=1$~\cite{MIIMM02} & $O(nm)$~\cite{Kun18}&$O(m)$~\cite{FIM07}\\
            free & always exists & \textbf{{\color{gray!70}$\NP$-complete}}&\textbf{{\color{gray!70}$\NP$-complete}}\\
		\hline
		\end{tabular}
        \end{adjustbox}
    \caption{Previous and our results summarized in a table. The contribution of this paper is marked by bold gray font. The instance has $n$ vertices, $m$ edges, $|P|$ forbidden edges, and $|Q|$~forced edges.}
		\label{ta:results}
\end{table*}

\section{Preliminaries}
\label{se:prel}

The input of the stable marriage problem with ties consists of a bipartite graph $G = (U \cup W, E)$ and for each $v \in U \cup W$, a weakly ordered preference list $O_v$ of the edges incident to~$v$. We denote the number of vertices in $G$ by $n$, while $m$ stands for the number of edges. An edge connecting vertices $u$ and~$w$ is denoted by~$uw$. We say that the preference lists in an instance are derived from a \emph{master list} if there is a weak order $O$ of $U \cup W$ so that each $O_v$ where $v \in U \cup W$ can be obtained by deleting entries from~$O$.

The set of restricted edges consists of the set of \emph{forbidden edges} $P$, the set of \emph{forced edges} $Q$, and the set of \emph{free edges}~$F$. These three sets are disjoint. 
\begin{definition}
A matching $M$ is \emph{weakly/strongly/super-stable with restricted edges $P,Q$, and $F$}, if $M \cap P = \emptyset$, $Q \subseteq M$, and the set of edges blocking $M$ in a weakly/strongly/super sense is a subset of~$F$.
\end{definition}

\section{Weak stability}
\label{se:forbidden}

In Theorem~\ref{thm:NP-hardForbidden} we present a hardness proof for the weakly stable matching problem with a single forbidden edge, even if this edge is ranked last by both end vertices. The hardness of the maximum-cardinality weakly stable matching problem in dense graphs (Theorem~\ref{th:dense}) follows easily from this result.

\begin{pr}\textsc{smt-forbidden-1} \\
	\inp A complete bipartite graph $G=(U \cup W,E)$, a forbidden edge $P = \left\{uw \right\}$ and preference lists with ties.\\
	\ques Does there exist a weakly stable matching $M$ so that $uw \notin M$?
\end{pr}

\begin{theorem}\label{thm:NP-hardForbidden}
    \textsc{smt-forbidden-1} is $\NP$-complete, even if all ties are of length two, they appear only on one side of the bipartition and at the beginning of the complete preference lists, and the forbidden edge is ranked last by both its end vertices.
\end{theorem}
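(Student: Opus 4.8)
The plan is to first settle membership in $\NP$ and then give a polynomial reduction from a hard restricted variant of weakly stable matching with incomplete lists. For membership, observe that given a candidate matching $M$ one can check in polynomial time that $uw \notin M$ and that no edge weakly blocks $M$ (for every non-matching edge, test whether both endpoints strictly prefer it to their current partners), so \textsc{smt-forbidden-1} lies in $\NP$.

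For hardness, I would reduce from the problem of deciding whether an \textsc{smt} instance with \emph{incomplete} lists admits a \emph{complete} (perfect) weakly stable matching, which is $\NP$-complete and, as I would want, remains so when every tie has length two and all ties occur at the front of the lists on one side only---the same restricted shape used to establish the hardness of maximum weakly stable matching in~\cite{IMMM99,MIIMM02}. Given such an instance $I$ on $U' \cup W'$, I build a complete bipartite graph $G$ with complete preference lists: each agent keeps its acceptable partners, in the original weak order, at the top of its list, and all remaining (previously unacceptable) pairs are appended at the bottom in a fixed strict master-list order, with the globally last pair declared the forbidden edge $uw$. By construction $G$ is complete bipartite, all ties have length two, lie on one side and at the front, and $uw$ is ranked last by both endpoints, so the output has exactly the promised structure.

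The easy direction is that a complete weakly stable matching of $I$ is perfect in $G$, uses only acceptable (top) edges, hence avoids $uw$, and remains weakly stable in $G$: since every agent is matched to an acceptable partner, no appended edge can be strictly preferred by both of its endpoints. The converse is where the real work lies. Any weakly stable matching $M'$ of $G$ restricts, on the acceptable edges, to a weakly stable matching of $I$; the danger is that $M'$ may avoid $uw$ yet still use other appended edges, thereby corresponding to an \emph{incomplete} matching of $I$ and producing a false positive. The heart of the proof is therefore to funnel all possible deficiency onto the single pair $uw$: I would augment the construction with a collector gadget (and choose the master-list order to match) so that in every weakly stable matching, using any appended edge forces the use of $uw$. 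Granting this, a matching that avoids $uw$ can use no appended edge at all, hence is an all-acceptable perfect matching, i.e.\ a complete weakly stable matching of $I$; together with the easy direction this gives the equivalence and completes the reduction.

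The main obstacle I anticipate is precisely the design and verification of this funneling gadget. I must guarantee that the unique stable way to absorb the agents left unmatched by acceptable edges routes the final overflow through $u$ and $w$, \emph{regardless} of which agents are left over, while keeping the bottom of every list strictly ordered so that the only remaining ties are the length-two, one-sided, front-of-list ties inherited from $I$. Establishing that avoiding $uw$ is equivalent to zero deficiency in $I$ is the crux; once it is in place, all the structural refinements in the statement follow immediately from the shape of the constructed lists.
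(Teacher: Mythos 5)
Your overall strategy---reduce from the perfect weakly stable matching problem in \textsc{smti} with the same restricted tie structure of~\cite{MIIMM02}, complete the bipartite graph with appended edges ranked below the original ones, and arrange matters so that a stable matching avoiding the forbidden edge can use no appended edge at all---is exactly the strategy of the paper. The problem is that your write-up stops precisely where the proof has to begin: the ``collector gadget'' that funnels every possible deficiency onto the single forbidden edge is announced, its required properties are postulated, and then the backward direction is argued only \emph{conditionally} on it (``Granting this\ldots''). Designing and verifying that gadget is the entire technical content of the theorem; NP membership, the forward direction, and the structural claims about ties are routine. So as it stands the proposal has a genuine gap, and it is the crux, not a detail.

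For comparison, here is what the missing gadget looks like in the paper. Add two new men $u_1,u_2$ and two new women $w_1,w_2$. Every original man ranks: his original edges (with their ties), then $w_1$, then the appended edges to original women, then $w_2$; symmetrically for original women with $u_1$ and $u_2$. The new vertices rank: $u_1$ ranks all of $W$ arbitrarily, then $w_2$, then $w_1$ last; $u_2$ ranks $W\cup\{w_1\}$ arbitrarily, then $w_2$ last; symmetrically for $w_1$ and $w_2$. The forbidden edge is $u_2w_2$, ranked last by both endpoints---crucially, it joins two \emph{new} vertices, not the ``globally last'' pair of original agents as in your sketch. The verification is then short: any weakly stable matching $M'$ of the complete balanced graph is perfect; if $u_2w_2\notin M'$, then the partner $w$ of $u_2$ and the partner $u$ of $w_2$ must be $w_1$ and $u_1$ respectively, since otherwise $uw$ itself is a blocking edge (both ends rank $u_2$, resp.\ $w_2$, below essentially everything else); and once the collectors $u_1,w_1$ are consumed in this way, any original man matched along an appended edge would form a blocking pair with $w_1$, who prefers every original man to her partner $u_2$. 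Hence $M'$ restricted to the original vertices uses only original edges and is a perfect weakly stable matching of the source instance. Note that this handles \emph{arbitrary} deficiency with four extra vertices, and note also why the location of the forbidden edge matters: if, as you propose, the forbidden edge were a pair of original agents, it is unclear how forbidding that one pair could constrain stable matchings in which neither of its endpoints is incident to an appended edge. Any gadget satisfying the properties you postulate will very likely have to reintroduce fresh vertices playing the roles of $u_1,w_1,u_2,w_2$ above.
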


\begin{proof}
 \textsc{smt-forbidden-1} is clearly in $\NP$, as any matching can be checked for stability in linear time.

 We reduce from the \textsc{perfect-smti} problem defined below, which is known to be $\NP$-complete even if all ties are of length two, and appear on one side of the bipartition and at the beginning of the preference lists, as shown by Manlove et al.~\cite{MIIMM02}.
 
 \begin{pr}\textsc{perfect-smti} \\
	\inp An incomplete bipartite graph $G=(U \cup W,E)$, and preference lists with ties.\\
	\ques Does there exist a perfect weakly stable matching~$M$?
 \end{pr}
 
 \paragraph{Construction.} To each instance $\mathcal{I}$ of \textsc{perfect-smti}, we construct an instance $\mathcal{I'}$ of \textsc{smt-forbidden-1}.
 
 Let $G=(U \cup W,E)$ be the underlying graph in instance~$\mathcal{I}$. When constructing $G'$ for $\mathcal{I'}$, we add two men $u_1$ and $u_2$ to $U$, and two women $w_1$ and $w_2$ to~$W$. On vertex classes $U' = U \cup \{u_1, u_2\}$ and $W' = W \cup \{w_1, w_2\}$, $G'$ will be a complete bipartite graph. As the list below shows, we start with the original edge set $E(G)$ in stage~0, and then add the remaining edges in four further stages. 
 An example for the built graph is shown in Figure~\ref{fig:max-SMTI-dense-reduction}.
 
 \begin{enumerate}[start=0]
    \item $E(G)$\\
    We keep the edges in~$E(G)$ and also preserve the vertices' rankings on them. These edges are solid black in Figure~\ref{fig:max-SMTI-dense-reduction}.
    \item $(U\times \{w_1\})\cup (\{u_1\} \times W)$\\
    We first connect $u_1$ to all women in $W$, and $w_1$ to all men in~$U$.
    Man $u_1$ (woman $w_1$) ranks the women from~$W$ (men from $U$) in an arbitrary order.
    Each $u \in U$ ($w\in W$) ranks $w_1$ ($u_1$) after all their edges in~$E(G)$. These edges are loosely dashed green in Figure~\ref{fig:max-SMTI-dense-reduction}.
 
    \item $(U\times W) \setminus E(G)$\\
    Now we add for each pair $(u, w)\in U\times W$ with $u w\notin E(G)$ the edge $u w$, where $u$ ($w$) ranks $w$ ($u$) even after $w_1$ ($u_1$). These edges are densely dashed blue in Figure~\ref{fig:max-SMTI-dense-reduction}.
 
    \item $\bigl[(U\cup\{u_1\}) \times \{w_2\}\bigr]\cup \bigl[ \{u_2\} \times (W\cup \{w_1\})\bigr]$\\
    Man $u_2$ is connected to all women from $W\cup \{w_1\}$, and ranks all these women in an arbitrary order.
    The women from $W\cup \{w_1\}$ rank $u_2$ worse than any already added edge.
    Similarly, $w_2$ is connected to all men from $M\cup \{u_1\}$, and ranks all these men in an arbitrary order.
    The men from $M\cup \{u_1\}$ rank $w_2$ worse than any already added edge. These edges are dotted red in Figure~\ref{fig:max-SMTI-dense-reduction}.
 
     \item ${u_1 w_1}$ and ${u_2 w_2}$\\
     Finally, we add the edges $u_1 w_1$ and $u_2 w_2$, which are ranked last by both of their end vertices.
     Edge $u_2 w_2$ is the only forbidden edge and it is the violet zigzag edge in Figure~\ref{fig:max-SMTI-dense-reduction}, while $u_2w_2 $ is wavy gray.
 \end{enumerate}
  
 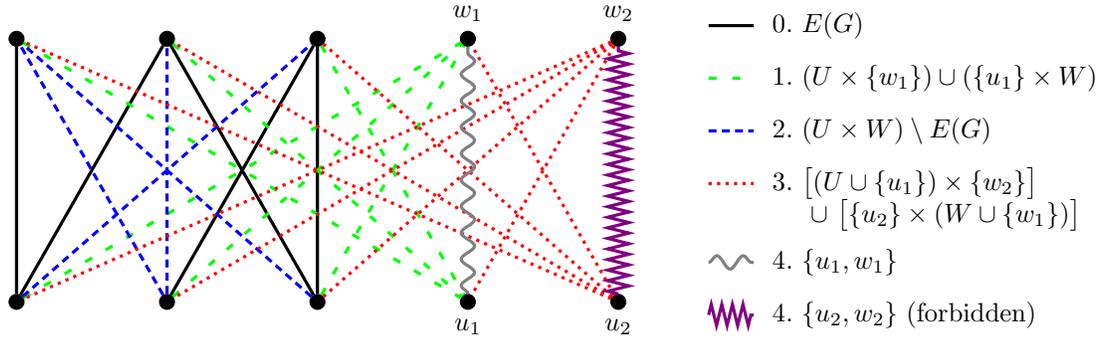
\begin{figure*}
     \centering
     \begin{tikzpicture}[yscale = 3.5, xscale=2.0]
       \tikzstyle{vertex}=[draw, circle, fill, inner sep = 2pt]
       
       \node[vertex, label=270:$u_1$] (m1) at (3, 0) {};
       \node[vertex, label=270:$u_2$] (m2) at (4, 0) {};
       \node[vertex] (m3) at (2, 0) {};
       \node[vertex] (m4) at (1, 0) {};
       \node[vertex] (m5) at (0, 0) {};
       
       \node[vertex, label=90:$w_1$] (w1) at (3, 1) {};
       \node[vertex, label=90:$w_2$] (w2) at (4, 1) {};
       \node[vertex] (w3) at (2, 1) {};
       \node[vertex] (w4) at (1, 1) {};
       \node[vertex] (w5) at (0, 1) {};
       
       \draw (m3) edge[very thick] (w3);
       \draw (m3) edge[very thick] (w4);
       \draw (m4) edge[very thick] (w3);
       \draw (m5) edge[very thick] (w5);
       \draw (m5) edge[very thick] (w4);
       
       \draw (m3) edge[green, loosely dashed, very thick] (w1);
       \draw (m4) edge[green, loosely dashed, very thick] (w1);
       \draw (m5) edge[green, loosely dashed, very thick] (w1);
       \draw (w3) edge[green, loosely dashed, very thick] (m1);
       \draw (w4) edge[green, loosely dashed, very thick] (m1);
       \draw (w5) edge[green, loosely dashed, very thick] (m1);
       
       \draw (m3) edge[blue, densely dashed, very thick] (w5);
       \draw (m4) edge[blue, densely dashed, very thick] (w4);
       \draw (m4) edge[blue, densely dashed, very thick] (w5);
       \draw (m5) edge[blue, densely dashed, very thick] (w3);
       
       \draw (m2) edge[red, dotted, very thick] (w1);
       \draw (m2) edge[red, dotted, very thick] (w3);
       \draw (m2) edge[red, dotted, very thick] (w4);
       \draw (m2) edge[red, dotted, very thick] (w5);
       \draw (w2) edge[red, dotted, very thick] (m1);
       \draw (w2) edge[red, dotted, very thick] (m3);
       \draw (w2) edge[red, dotted, very thick] (m4);
       \draw (w2) edge[red, dotted, very thick] (m5);

       \draw (m1) edge[gray, snake it, very thick] (w1);
       
       \draw (m2) edge[very thick, violet, decoration = {zigzag, segment length=4, amplitude = 4}, decorate] (w2);
       
       \begin{scope}[yshift=0.05cm, xshift=-0.1cm]
           \draw (4.7,1)  edge[very thick] (5, 1);
           \node[label=0:0. $E(G)$] (l1) at (5, 1) {};
           \draw (4.7, 0.8) edge[very thick, green, loosely dashed] (5, 0.8);
           \node[label=0:1. $(U\times \{w_1\})\cup (\{u_1\} \times W)$] (l2) at (5, 0.8) {};
           \draw (4.7,0.6) edge[very thick, blue, densely dashed] (5, 0.6);
           \node[label=0:2. $(U\times W) \setminus E(G)$] (l3) at (5, 0.6) {};
           \draw (4.7,0.4) edge[very thick, red, dotted] (5, 0.4);
           \node[label=0:3. $\bigl{[} ( U\cup \{u_1\}) \times \{w_2\}\bigr{]}$] (l4) at (5, 0.4) {};
           \node[label=0:{\color{white} 3. }$ \ \cup \ \bigl{[}  \{u_2\} \times (W\cup \{w_1\})\bigr{]}$] (l45) at (5, 0.28) {};
           \draw (4.7,0.1) edge[very thick, gray, snake it] (5, 0.1);
           \node[label=0:4. ${\{u_1, w_1\}}$] (l5) at (5, 0.1) {};
           \draw (4.7, -0.1) edge[very thick, violet, decoration = {zigzag, segment length=4, amplitude = 4}, decorate] (5, -0.1);
           \node[label=0:4. ${\{u_2, w_2\}}$ (forbidden)] (l6) at (5, -0.1) {};
       \end{scope}
     \end{tikzpicture}
     \caption{An example for the reduction. The legend on the right side lists the five groups of edges in the preference order at all vertices. The edges from the \textsc{perfect-smti} instance (drawn in solid black) keep their ranks.
     Every vertex ranks solid black edges best, then loosely dashed green edges, then densely dashed blue edges, then dotted red edges, then the wavy gray edge $\{u_1, w_1\}$ and the forbidden violet zigzag edge $\{u_w, w_2\}$.}
     \label{fig:max-SMTI-dense-reduction}
 \end{figure*}{}
 
 \bfseries Claim: \mdseries $\mathcal{I}$ admits a perfect stable matching if and only if $\mathcal{I}'$ admits a stable matching not containing~$u_2 w_2$.

 ($\Rightarrow$) Let $M$ be a perfect stable matching in~$\mathcal{I}$. 
 We construct $M'$ as $M \cup \{u_1 w_2\} \cup \{u_2 w_1\}$.
 Clearly, $M'$ is a matching not containing the forbidden edge $u_2 w_2$, so it only remains to show that $M'$ is stable. We do this by case distinction on a possible blocking edge.
 
\begin{enumerate}[start=0]
    \item $E(G)$\\
    Since $M$ does not admit a blocking edge in $\mathcal{I}$, no edge from the original $E(G)$ can block $M'$ in $\mathcal{I}'$.
    \item $(U\times \{w_1\}) \cup (\{u_1\} \times W)$\\
    All vertices in $U \cup W$ rank these edges lower than their edges in~$M'$.
    \item $(U\times W) \setminus E(G)$\\
    Edges in this set cannot block $M'$ because they are ranked worse than edges in $M'$ by both of their end vertices.
    \item $\bigl[ (U \cup \{u_1\}) \times \{w_2\}\bigr] \cup \bigl[ ( W\cup \{w_1\}) \times \{u_2\}\bigr]$\\
    Vertices in $U \cup W$ prefer their edge in $M'$ to all edges in this set. Since they are in $M'$, $u_1 w_2$ and $u_2 w_1$ also cannot block~$M'$.
    \item ${u_1 w_1}$ and ${u_2 w_2}$\\
    These two edges are strictly worse than $u_1 w_2 \in M'$ and $u_2 w_1\in M'$ at all four end vertices.
\end{enumerate}
     
 ($\Leftarrow$) Let $M'$ be a stable matching in~$\mathcal{I}'$ and $u_2 w_2 \notin M'$.
 Since $G'$ is a complete bipartite graph with the same number of vertices on both sides, $M'$ is a perfect matching. In particular, $u_2$ and $w_2$ are matched by $M'$, say to~$w$ and~$u$, respectively.
 Since $M'$ does not contain the forbidden edge~$u_2 w_2$, we have that $u\neq u_2$ and $w\neq w_2$.
 Then we have $w= w_1$ and $u = u_1$, as $u w$ blocks $M'$ otherwise.
 
 If $M'$ contains an edge $u w\notin E(G)$ with $u\in U$ and $w\in W$, then this implies that $u w_1$ is a blocking edge.
 Thus, $M:= M'\setminus \{u_1 w_2, u_2 w_1\}\subseteq E(G)$, i.e.\ it is a perfect matching in~$G$.
 This $M$ is also stable, as any blocking edge in $G$ immediately implies a blocking edge for $M'$, which contradicts our assumption on $M'$ being a stable matching. 
\end{proof}

As a byproduct, 
we get that \textsc{max-smti-dense}, the problem of deciding whether an almost complete bipartite graph admits a perfect weakly stable matching, is also $\NP$-complete.

\begin{pr}\textsc{max-smti-dense} \\
	\inp A bipartite graph $G=(U \cup W,E)$, where $E(G) = K_{n, n} \setminus \{e\}$ and preference lists with ties.\\
	\ques Does there exist a perfect weakly stable matching~$M$?
\end{pr}

\begin{theorem}
\label{th:dense}
 \textsc{max-smti-dense} is $\NP$-complete, even if all ties are of length two, are on one side of the bipartition, and appear at the beginning of the preference lists.
\end{theorem}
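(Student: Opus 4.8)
The plan is to obtain Theorem~\ref{th:dense} almost for free from the construction used in the proof of Theorem~\ref{thm:NP-hardForbidden}. Recall that that reduction builds a complete bipartite graph $G'$ on $U' \cup W'$ together with a single forbidden edge $u_2 w_2$ that is ranked last by both of its end vertices. I would take exactly this instance and simply delete the edge $u_2 w_2$ from $G'$, obtaining a graph $G'' = K_{n,n} \setminus \{u_2 w_2\}$, which is precisely an almost complete bipartite graph missing one edge and hence a legal instance of \textsc{max-smti-dense}. All preference lists are inherited from $G'$ by just dropping the entry $u_2 w_2$; in particular the structural restrictions claimed in the theorem (ties of length two, appearing only on one side of the bipartition and at the beginning of the lists) are immediate from the construction in Theorem~\ref{thm:NP-hardForbidden}.

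The heart of the argument is to show that $G''$ admits a perfect weakly stable matching if and only if $G'$ admits a weakly stable matching avoiding the forbidden edge $u_2 w_2$; the $\NP$-completeness of the latter is exactly Theorem~\ref{thm:NP-hardForbidden}. For the direction from $G'$ to $G''$, any weakly stable matching $M'$ in $G'$ that avoids $u_2 w_2$ is a matching in $G''$, it is perfect (as already established in the proof of Theorem~\ref{thm:NP-hardForbidden}), and any edge blocking it in $G''$ would also block it in $G'$ since $G'' \subseteq G'$; hence $M'$ is perfect and weakly stable in $G''$. For the converse, a perfect weakly stable matching $M''$ in $G''$ automatically avoids $u_2 w_2$, since that edge is absent from $G''$, and the only edge present in $G'$ but not in $G''$ is $u_2 w_2$ itself.

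The one point that needs care — and which I expect to be the only real obstacle — is that forbidden edges are still allowed to block (see Table~\ref{ta:restricted_edges}), so passing from $G''$ back to $G'$ could in principle introduce a blocking edge. This is exactly where the ``ranked last by both end vertices'' property is used: in a perfect matching both $u_2$ and $w_2$ are matched, and since each of them ranks $u_2 w_2$ strictly worst, neither can strictly prefer $u_2 w_2$ to its assigned partner, so $u_2 w_2$ never weakly blocks. Consequently $M''$ is weakly stable in $G'$ as well, completing the equivalence. Membership in $\NP$ is clear, as a candidate matching can be checked for perfectness and weak stability in linear time.
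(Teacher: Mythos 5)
Your proposal is correct and follows essentially the same route as the paper: reduce from \textsc{smt-forbidden-1} with the forbidden edge ranked last by both endpoints (Theorem~\ref{thm:NP-hardForbidden}), delete that edge, use completeness of the bipartite graph to get perfectness of any stable matching in one direction, and use perfectness plus the ``ranked last'' property to argue the deleted edge cannot block in the other. The paper's proof is exactly this argument, so no further comparison is needed.
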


\begin{proof}
 \textsc{max-smti-dense} is in $\NP$, as a matching can be checked for stability in linear time.
 
 We reduce from \textsc{smt-forbidden-1}.
 By Theorem~\ref{thm:NP-hardForbidden}, this problem is even $\NP$-complete if the forbidden edge $u w$ is at the end of the preference lists of $u$ and~$w$.
 For each such instance $\mathcal{I}$ of \textsc{smt-forbidden-1}, we construct an instance $\mathcal{I'}$ of \textsc{max-smti-dense} by deleting the forbidden edge $u w$.
 
 \bfseries Claim: \mdseries
 The instance $\mathcal{I}$ admits a stable matching if and only if $\mathcal{I'}$ admits a perfect stable matching.
 
 ($\Rightarrow$) Let $M$ be a stable matching for~$\mathcal{I}$.
 As \textsc{smt-forbid\-den-1} gets a complete bipartite graph as an input, $M$ is a perfect matching.
 Since $M$ does not contain the edge $u w$, it is also a matching in $\mathcal{I'}$.
 Moreover, $M$ is stable there, because the transformation only removed a possible blocking edge and added none of these.
 
 ($\Leftarrow$) Let $M'$ be a perfect stable matching in~$\mathcal{I'}$.
 Since~$u w$ is at the end of the preference lists of $u$ and $w$, and $M'$ is perfect, $u w$ cannot block~$M'$.
 Thus, $M'$ is stable in~$\mathcal{I}$.
\end{proof}

Having shown a hardness result for the existence of a weakly stable matching even in very restricted instances with a single forbidden edge in Theorem~\ref{thm:NP-hardForbidden}, we now turn our attention to strongly and super-stable matchings.

\section{Strong and super-stability}
\label{se:free}

As already mentioned in Section~\ref{sec:literature}, strongly and super-stable matchings can be found in polynomial time if forced and forbidden edges both occur in the instance~\cite{FIM07,Kun18}. Thus we consider the case of free edges, and in Theorem~\ref{thm:NP-hardFree} and Proposition~\ref{thm:free_complete} we show hardness for the strong and super-stable matching problems in instances with free edges. The same construction suits both cases. Then, in Proposition~\ref{thm:par_free} we remark that both problems are fixed-parameter tractable with $|F|$ as the parameter.

\begin{pr}\textsc{ssmti-free} \\
	\inp A bipartite graph $G=(U \cup W,E)$, a set $F \subseteq E$ of free edges, and preference lists with ties.\\
	\ques Does there exist a matching $M$ so that $uw \in F$ for all $uw \in E$ that blocks $M$ in the strongly/super-stable sense?
\end{pr}

In \textsc{ssmti-free}, we define two problem variants simultaneously, because all our upcoming proofs are identical for both of these problems. For the super-stable marriage problem with ties and free edges, all super-blocking edges must be in~$F$, while for the strongly stable marriage problem with ties and free edges, it is sufficient if a subset of these, the strongly blocking edges, are in~$F$.


\begin{theorem}\label{thm:NP-hardFree}
    \textsc{ssmti-free} is $\NP$-complete even in graphs with maximum degree four, and if preference lists of women are derived from a master list.
\end{theorem}

\begin{proof}
 \textsc{ssmti-free} is clearly in $\NP$ because the set of edges blocking a matching can be determined in linear time.

 We reduce from the \textsc{1-in-3 positive 3-sat} problem, defined below, which is known to be $\NP$-complete~\cite{Sch78,GJ79,PSSW14}.
  \begin{pr}\textsc{1-in-3 positive 3-sat} \\
	\inp A 3-SAT formula, in which no literal is negated and every variable occurs in at most three clauses.\\
	\ques Does there exist a satisfying truth assignment that sets exactly one literal in each clause to be true?
\end{pr}

\paragraph{Construction.} To each instance $\mathcal{I}$ of \textsc{1-in-3 positive 3-sat}, we construct an instance $\mathcal{I}'$ of \textsc{ssmti-free}.

Let $x_1, \ldots, x_n$ be the variables and $C_1, \dots, C_m$ be the claus\-es of the \textsc{1-in-3 positive 3-sat} instance $\mathcal{I}$. For each clause $C_i$, we add a clause gadget consisting of three vertices $a_i$, $b_i$, and~$c_i$, where~$b_i$ is connected to $a_i$ and $c_i$, as shown in Figure~\ref{fclausegadget}. While vertices~$a_i$ and $b_i$ do not have any further edge,~$c_i$ will be incident to three \emph{interconnecting} edges leading to variable gadgets. Vertex $b_i$ is ranked first by $a_i$ and last by~$c_i$, and these two vertices are placed in a tie by~$b_i$.
 
  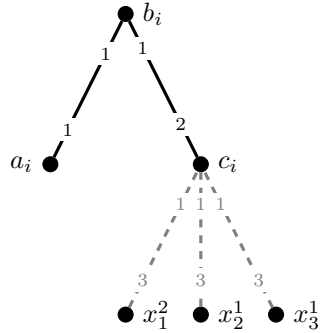
\begin{figure}
   \begin{center}
     \begin{tikzpicture}
       \tikzstyle{vertex}=[draw, circle, fill, inner sep = 2pt]
       
       \node[vertex, label=180:$a_{i}$] (ai) at (3, 0) {};
       \node[vertex, label=0:$b_{i}$] (bi) at (4, 2) {};
       \node[vertex, label=0:$c_{i}$] (ci) at (5, 0) {};

       \draw (ai) edge[very thick] node[pos=0.2, fill=white, inner sep=2pt] {\scriptsize $1$}  node[pos=0.76, fill=white, inner sep=2pt] {\scriptsize $1$} (bi);
        
       \draw (bi) edge[very thick] node[pos=0.2, fill=white, inner sep=2pt] {\scriptsize $1$}  node[pos=0.76, fill=white, inner sep=2pt] {\scriptsize $2$} (ci);
       
       \node[vertex, label=0:$x_1^2$] (x1) at (4, -2) {};
       \node[vertex, label=0:$x_2^1$] (x2) at (5, -2) {};
       \node[vertex, label=0:$x_3^1$] (x3) at (6, -2) {};
       
       \draw (x1) edge[dashed, gray, very thick] node[pos=0.2, fill=white, inner sep=2pt] {\scriptsize $3$}  node[pos=0.76, fill=white, inner sep=2pt] {\scriptsize $1$} (ci);
       \draw (x2) edge[dashed, gray, very thick] node[pos=0.2, fill=white, inner sep=2pt] {\scriptsize $3$}  node[pos=0.76, fill=white, inner sep=2pt] {\scriptsize $1$} (ci); 
       \draw (x3) edge[dashed, gray, very thick] node[pos=0.2, fill=white, inner sep=2pt] {\scriptsize $3$}  node[pos=0.76, fill=white, inner sep=2pt] {\scriptsize $1$} (ci);  
     \end{tikzpicture}

   \end{center}
   \caption{An example of a clause gadget for the clause $C_i$, containing the variables $x_1$, $x_4$, and $x_5$.
   The interconnecting edges are dashed and gray.}\label{fclausegadget}
 \end{figure}
 
 For each variable $x_i$, occurring in the three clauses $C_{i_1}$,~$C_{i_2}$, and $C_{i_3}$, we add a variable gadget with nine vertices $y^j_i$, $z^j_i$, and~$w_i^j$ for $j\in [3]$, as indicated in Figure~\ref{fvariablegadget}.
 Each vertex $z^j_i$ is connected only to $y^j_i$ by a free edge, and these are the only free edges in our construction.
 For each $(\ell, j)\in [3]^2$, we add an edge $w^\ell_i y^j_i$, which is ranked second (after $z^j_i$) by~$y^j_i$.
 The vertex $w^\ell_i$ ranks this edge at position one if $\ell=j$ and else at position two.
 Finally, we connect the vertex $w^\ell_i$ to the vertex $c_{i_\ell}$ by an interconnecting edge, ranked at position one by $c_{i_\ell}$ and position three by~$w^\ell_i$.

 The resulting instance is bipartite: $U= \{z_i^j, w_i^j, b_i\}$ is the set of men and $W = \{y_i^j, c_i, a_i\}$ is the set of women.
 One easily sees that the maximum degree in our reduction is four.
 
 Note that the preference lists of the women in the \textsc{ssmti-free} instance are derived from a master list. The master list for the women $W = \{y_i^j, c_i, a_i\}$ is the following.
 At the top are all vertices of the form $\{z_i^j\}$ in a single tie, followed by all vertices of the form $\{w_i^j\}$ in a single tie, and finally, all other vertices ($\{b_i\}$) at the bottom of the preference list.
 
  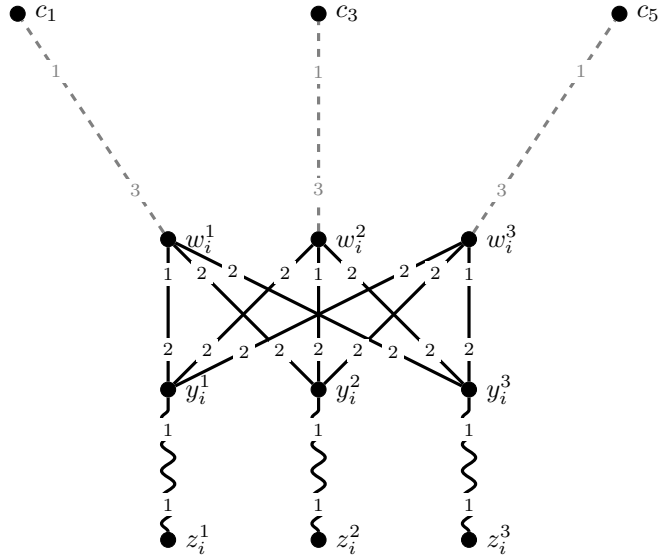
\begin{figure}
   \begin{center}
     \begin{tikzpicture}[scale=2]
       \tikzstyle{vertex}=[draw, circle, fill, inner sep = 2pt]
       
       \node[vertex, label=0:$y^1_{i}$] (y1) at (0, 1) {};
       \node[vertex, label=0:$z^1_{i}$] (z1) at (0, 0) {};
       \node[vertex, label=0:$w^1_{i}$] (w1) at (0, 2) {};
       
       \node[vertex, label=0:$y^2_{i}$] (y2) at (1, 1) {};
       \node[vertex, label=0:$z^2_{i}$] (z2) at (1, 0) {};
       \node[vertex, label=0:$w^2_{i}$] (w2) at (1, 2) {};
       
       \node[vertex, label=0:$y^3_{i}$] (y3) at (2, 1) {};
       \node[vertex, label=0:$z^3_{i}$] (z3) at (2, 0) {};
       \node[vertex, label=0:$w^3_{i}$] (w3) at (2, 2) {};

       \draw (z1) edge[snake it, very thick] node[pos=0.2, fill=white, inner sep=2pt] {\scriptsize $1$}  node[pos=0.76, fill=white, inner sep=2pt] {\scriptsize $1$} (y1);
       \draw (z2) edge[snake it, very thick] node[pos=0.2, fill=white, inner sep=2pt] {\scriptsize $1$}  node[pos=0.76, fill=white, inner sep=2pt] {\scriptsize $1$} (y2);
       \draw (z3) edge[snake it, very thick] node[pos=0.2, fill=white, inner sep=2pt] {\scriptsize $1$}  node[pos=0.76, fill=white, inner sep=2pt] {\scriptsize $1$} (y3);
        
       \draw (w1) edge[very thick] node[pos=0.2, fill=white, inner sep=2pt] {\scriptsize $1$}  node[pos=0.76, fill=white, inner sep=2pt] {\scriptsize $2$} (y1); 
       \draw (w1) edge[very thick] node[pos=0.2, fill=white, inner sep=2pt] {\scriptsize $2$}  node[pos=0.76, fill=white, inner sep=2pt] {\scriptsize $2$} (y2); 
       \draw (w1) edge[very thick] node[pos=0.2, fill=white, inner sep=2pt] {\scriptsize $2$}  node[pos=0.76, fill=white, inner sep=2pt] {\scriptsize $2$} (y3);
        
       \draw (w2) edge[very thick] node[pos=0.2, fill=white, inner sep=2pt] {\scriptsize $2$}  node[pos=0.76, fill=white, inner sep=2pt] {\scriptsize $2$} (y1); 
       \draw (w2) edge[very thick] node[pos=0.2, fill=white, inner sep=2pt] {\scriptsize $1$}  node[pos=0.76, fill=white, inner sep=2pt] {\scriptsize $2$} (y2); 
       \draw (w2) edge[very thick] node[pos=0.2, fill=white, inner sep=2pt] {\scriptsize $2$}  node[pos=0.76, fill=white, inner sep=2pt] {\scriptsize $2$} (y3);
        
       \draw (w3) edge[very thick] node[pos=0.2, fill=white, inner sep=2pt] {\scriptsize $2$}  node[pos=0.76, fill=white, inner sep=2pt] {\scriptsize $2$} (y1); 
       \draw (w3) edge[very thick] node[pos=0.2, fill=white, inner sep=2pt] {\scriptsize $2$}  node[pos=0.76, fill=white, inner sep=2pt] {\scriptsize $2$} (y2); 
       \draw (w3) edge[very thick] node[pos=0.2, fill=white, inner sep=2pt] {\scriptsize $1$}  node[pos=0.76, fill=white, inner sep=2pt] {\scriptsize $2$} (y3);
       
       \node[vertex, label=0:$c_1$] (c1) at (-1, 3.5) {};
       \node[vertex, label=0:$c_3$] (c3) at (1, 3.5) {};
       \node[vertex, label=0:$c_5$] (c5) at (3, 3.5) {};
       
       \draw (w1) edge[dashed, gray, very thick] node[pos=0.2, fill=white, inner sep=2pt] {\scriptsize $3$}  node[pos=0.76, fill=white, inner sep=2pt] {\scriptsize $1$} (c1); 
       \draw (w2) edge[dashed, gray, very thick] node[pos=0.2, fill=white, inner sep=2pt] {\scriptsize $3$}  node[pos=0.76, fill=white, inner sep=2pt] {\scriptsize $1$} (c3);
       \draw (w3) edge[dashed, gray, very thick] node[pos=0.2, fill=white, inner sep=2pt] {\scriptsize $3$}  node[pos=0.76, fill=white, inner sep=2pt] {\scriptsize $1$} (c5); 
       
     \end{tikzpicture}

   \end{center}
   \caption{An example of a variable gadget for the variable $x_i$ occurring, where $x_i$ occurs exactly in the clauses $C_1$, $C_3$, and~$C_5$. Free edges are marked by wavy lines, while interconnecting edges are dashed and gray.}\label{fvariablegadget}
 \end{figure}
 
 \bfseries Claim: \mdseries $\mathcal{I} $ is a YES-instance if and only if $\mathcal{I}'$ admits a strongly/super-stable matching.
 
 ($\Rightarrow$) Let $T$ be a satisfying truth assignment such that for each clause, exactly one literal is true.
 For each true variable~$x_i$ in this assignment, let $M$ contain the edges $w_i^\ell c_{i_\ell}$ and $y^\ell_i z^\ell_i$ for each $\ell \in [3]$.
 For all other variables, let $M$ contain $w^\ell_i y^\ell_i$ for each $\ell \in [3]$.
 For each clause $C_i$, add the edge $a_i b_i$ to $M$.
 
 Following these rules, we have constructed a matching. It remains to check that $M$ is super-stable (and thus also strongly stable).
 Since $a_i$ is matched to its only neighbor, it cannot be part of a blocking edge.
 Since each $c_i$ is matched along an interconnecting edge, which is better than $b_i$, no blocking edge involves~$b_i$.
 A blocking interconnecting edge $c_i w^\ell_j$ implies that $w^\ell_j$ is not matched to any $y^\ell_j$, however this is only true if $c_i w^\ell_j\in M$.
 A blocking edge $w^\ell_i y_i^j$ does not appear. Either $w^\ell_i$ is matched to its unique first choice $y^\ell_i$ and therefore not part of a blocking edge, or $y^j_i$ is matched to its unique first choice $z^j_i$, and thus, $y^j_i$ is not part of a blocking edge.
 
 ($\Leftarrow$) Let $M$ be a strongly stable matching (note that any super-stable matching is also strongly-stable).
 Then $M$ contains the edge $a_i b_i$, and $c_i$ is matched to a vertex $w_j^\ell$ for all $i\in [m]$, as else $c_i b_i$ or $a_i b_i$ block~$M$.
 If $w_j^\ell c_i\in M$, then $y_j^a z_j^a\in M$ for all $a\in [3]$, as else $w_j^\ell y_j^a$ would be a blocking edge.
 This, however, implies that $w_j^a c_{j_a}\in M$ for all $a\in [3]$, as else $w_j^a c_{j_a}$ would be a blocking edge.
 
 Thus, for each variable $x_i$, the matching $M$ contains either all edges $w_i^\ell c_{i_\ell}$ for $\ell\in [3]$ or none of these edges.
 Thus, the variables $x_i$ such that $M$ contains $w_i^\ell c_{i_\ell}$ for $\ell\in [3]$ induce a truth assignment such that for each clause, exactly one literal is true.
\end{proof}

This proof aimed at the hardness of the restricted case, in which the underlying graph has a low maximum degree. For the sake of completeness, we add another variant, which is defined in a complete bipartite graph.

\begin{proposition}
\label{thm:free_complete}
 \textsc{ssmti-free} is $\NP$-complete, even in complete bipartite graphs, where each tie has length at most three.
\end{proposition}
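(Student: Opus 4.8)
The plan is to reuse, essentially verbatim, the reduction from \textsc{1-in-3 positive 3-sat} built in the proof of Theorem~\ref{thm:NP-hardFree}, and to turn its sparse underlying graph into a complete bipartite one by appending every missing edge at the very bottom of both endpoints' preference lists. Concretely, for each man--woman pair $uw$ that is not yet an edge, I would add $uw$ and rank it below all edges already present at $u$ and at $w$; among themselves these new \emph{junk} edges are ordered strictly in an arbitrary but fixed way, so that they create no new tie. The set of free edges stays exactly $\{z_i^j y_i^j\}$, and in particular no junk edge is free. Since in the original construction the women already had ties of length at most three and the men ties of length at most two, and the junk edges are strict, every tie still has length at most three, as claimed. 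Membership in \NP\ is immediate, exactly as before.

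The whole argument hinges on one observation: a junk edge is strictly worse than every original edge at both of its endpoints, so it can block a matching (in the super- or the strong sense) only if \emph{both} of its endpoints are unmatched or matched along other junk edges. Consequently, if a matching matches every woman along an original edge, then no junk edge blocks it, because each junk edge has a woman endpoint who strictly prefers her partner. For the forward direction I would therefore take the super-stable matching $M$ obtained from a satisfying $1$-in-$3$ assignment exactly as in Theorem~\ref{thm:NP-hardFree} and leave it unchanged, so that the $z_i^\ell$ of false variables stay unmatched and no junk edge is used. In $M$ every woman ($a_i$, $c_i$, and each $y_i^j$) is matched along an original edge, so by the observation no junk edge blocks; and since the ranking among original edges is untouched and the partners are the same, the original case analysis still excludes every blocking original edge. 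Hence $M$ remains super-stable, and therefore strongly stable, in the completed instance.

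For the reverse direction I would run the original extraction argument on a super- or strongly stable matching $M'$ of the completed instance, verifying at each step that a junk partner can never rescue a vertex from its forced move. Because a junk partner is strictly worse than $b_i$ for $a_i$ and $c_i$, strictly worse than any real edge for every $y_i^j$, and strictly worse than $c_{i_\ell}$ for every $w_i^\ell$, the same blocking edges used before ($a_i b_i$, $c_i b_i$, $w_j^\ell y_j^a$, and $w_j^a c_{j_a}$) still force $a_i b_i \in M'$, force $c_i$ to be matched along an interconnecting edge, force every $y_j^a$ to be matched to $z_j^a$, and force every $w_j^a$ to be matched to $c_{j_a}$ once one of them is. This again produces the all-or-nothing behaviour per variable and recovers an exact $1$-in-$3$ assignment.

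The main obstacle, and the only point that genuinely relies on completeness rather than being inherited, is controlling the newly unmatched men. The construction has strictly more men than women, so some men (at least the $z_i^j$ of false variables) are unmatched in every solution, and in a complete bipartite graph an unmatched man together with an unmatched woman would immediately form a non-free blocking junk edge. The crucial step is thus to show that in every super- or strongly stable matching all women are matched along original edges — which is precisely what the reverse-direction analysis establishes — so that all unmatched vertices lie on the men's side and no junk edge can block. Once this is in place, both reductions close and the two variants of \textsc{ssmti-free} are $\NP$-complete on complete bipartite graphs with ties of length at most three.
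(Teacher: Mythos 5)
Your proof is correct, but it takes a genuinely different route from the paper's. The paper reduces from \textsc{ssmti-free} itself and completes the bipartite graph by adding every missing pair as a \emph{free} edge ranked below all original edges; since free edges can never block, both directions of the equivalence become nearly immediate (forward: adding free edges cannot destroy stability; backward: a blocking edge of $M\cap E(G)$ in $G$ would have an endpoint matched along an added edge, which it likes even less, so the edge would already block in the completed instance). You instead rerun the Theorem~\ref{thm:NP-hardFree} gadgetry and complete the graph with \emph{non-free} junk edges, which forces you to reopen the case analysis and establish that in the relevant matchings every woman is matched along an original edge, so no junk edge can block. What your extra work buys is a strictly stronger statement: hardness persists on complete bipartite graphs even when the free edges are only the original $z_i^j y_i^j$ edges (a set of size linear in the instance, forming a matching), whereas the paper's construction introduces $\Theta(n^2)$ free edges. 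What the paper's approach buys is brevity and modularity: it is a generic completion argument that works from \emph{any} \textsc{ssmti-free} instance without depending on the internals of the SAT gadgets. One small remark on your last paragraph: the worry about unmatched men meeting unmatched women only matters in the forward direction, where you must certify that your constructed $M$ stays stable; in the reverse direction you are \emph{given} a strongly/super-stable matching, so by assumption nothing non-free blocks it, and the extraction argument needs no separate defense against junk edges. Your forward direction already handles this correctly (all women are matched along original edges by construction), so the slip is only in the framing, not in the proof.
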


\begin{proof}
 We reduce from \textsc{ssmti-free}.
 Given a \textsc{ssmti-free} instance on graph $G$, we add all non-present edges between men and women as free edges, ranked worse than any edge from~$E(G)$. We call the resulting graph~$H$.

 Clearly, a strongly/super-stable matching in $G$ is also strong\-ly/super-stable in $H$, as we only added free edges.
 
 Vice versa, let $M$ be a strongly/super-stable matching in $H$.
 Let $M':= M\cap E(G)$ arise from $M$ by deleting all edges not in~$E(G)$.
 Then $M'$ clearly is a matching in $G$, so it remains to show that $M'$ is strongly/super-stable.
 
 Assume that there is a blocking edge $u w$ in, $G$, in the strongly/super-stable sense.
 Since~$u w$ is not blocking in $H$, at least one of $u$ and $w$ has to be matched in $H$, but not in~$G$.
 However, this vertex prefers $u w$ also to its partner in $H$, and thus, $u w$ is also blocking in $H$, which is a contradiction.
\end{proof}

Note that \textsc{ssmti-free} becomes polynomial-time solvable if only a constant number of edges is free in the same way as \textsc{max-ssmi}, the problem of finding  a maximum-cardi\-na\-li\-ty stable matching with strict lists and free edges~\cite{AIKMP13}.

\begin{proposition}
\label{thm:par_free}
 \textsf{SSMTI-Free} can be solved in $\mathcal{O}(2^k n m)$ time in the strongly stable case, and in $\mathcal{O}(2^k m)$ time in the super-stable case, where $k:= |F|$ is the number of free edges.
\end{proposition}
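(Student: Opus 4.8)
The plan is to turn \textsc{ssmti-free} into $2^k$ instances that contain no free edges at all, and then call the polynomial-time routines that are already available once only forced and forbidden edges are present: $O(nm)$ time for strong stability by Kunysz~\cite{Kun18}, and $O(m)$ time for super-stability by Fleiner et al.~\cite{FIM07}. The branching is over the possible trace $F' := M \cap F$ of a hypothetical solution $M$ on the free edge set. For each guess $F' \subseteq F$, I would build a modified instance in which every edge of $F'$ is declared \emph{forced}, while every edge of $F \setminus F'$ is \emph{deleted} from the graph, i.e.\ removed from the preference lists of both its endpoints. The modified instance has no free edges, so I can invoke the forced/forbidden subroutine on it, and accept overall iff some branch returns a matching. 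Note that it is essential to delete the edges of $F \setminus F'$ rather than merely declare them forbidden: a forbidden edge is still allowed to block, whereas a free edge outside the matching must be ignored as a potential blocker.

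The correctness rests on two observations. First, an edge lying in $M$ can never block $M$, so forcing the edges of $F'$ into the matching automatically neutralises them as blockers, which is exactly what the free-edge semantics asks for. Second --- and this is the step needing the most care --- deleting the edges of $F \setminus F'$ does not change whether any surviving edge blocks. Whether an edge $uw$ blocks $M$ is decided solely by comparing $uw$ with the $M$-partner of $u$ at $u$ and with the $M$-partner of $w$ at $w$; since a deleted edge is never matched (the branch keeps it out of $M$) and deletion preserves the relative order of the remaining list entries, the positions of $uw$ and of the two $M$-partners are all intact and their mutual ranking is unchanged. Hence a surviving edge blocks in the modified instance if and only if it blocks in the original one.

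These two facts give the required equivalence. In the forward direction, given a solution $M$ of \textsc{ssmti-free}, set $F' := M \cap F$; then $M$ uses precisely the forced edges $F'$, avoids every deleted edge, and has no blocking edge in the modified instance, because its blocking edges were contained in $F$, those in $F'$ are now matched, and those in $F \setminus F'$ have been deleted. Thus the branch for $F'$ succeeds. Conversely, a matching returned in the branch for $F'$ contains $F'$, avoids $F \setminus F'$, and has no blocking surviving edge; by the deletion observation its only possible blocking edges in the original instance lie in $F \setminus F' \subseteq F$, so it is a valid solution of \textsc{ssmti-free}.

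For the running time I would simply note that there are $2^k$ branches, that building the modified instance costs $O(m)$ per branch, and that the subroutine costs $O(nm)$ in the strong case and $O(m)$ in the super case, giving the claimed $O(2^k n m)$ and $O(2^k m)$ bounds. I expect the only genuine obstacle to be the blocking-preservation argument under edge deletion; once that is established, the rest is routine bookkeeping over the $2^k$ guesses.
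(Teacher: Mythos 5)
Your proposal is correct and follows essentially the same approach as the paper: guess the trace $Q = M \cap F$ of the solution on the free edges, force those edges, delete the remaining free edges, and call the known forced-edge subroutines ($O(nm)$ strong, $O(m)$ super) over all $2^k$ guesses. You spell out the blocking-preservation-under-deletion argument more explicitly than the paper does, but the construction and equivalence are identical.
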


\begin{proof}
For each subset $Q\subseteq F$ of free edges, we construct an instance of \textsc{ssmti-forced} as follows.
Mark all edges in $Q$ as forced, and delete all other free edges.

If any of the \textsc{ssmti-forced} instances admits a stable matching, then this is clearly a stable matching in the \textsc{ssmti-free} instance, as only free edges were deleted.
Vice versa, any solution $M$ for the \textsc{ssmti-free} instance containing exactly the set of forced edges $Q$ (i.e.\ $Q= M\cap F$) immediately implies a solution for the \textsc{ssmti-forced} instance with forced edges~$Q$.

Clearly, there are $2^{k}$ subsets of $F$.
Since any instance of \textsc{ssmti-forced} can be solved in $\mathcal{O} (nm)$ time in the strongly stable case~\cite{FIM07} and in $\mathcal{O}(m)$ time in the super-stable case~\cite{Kun18}, the running time follows.
\end{proof}

\section{Conclusion}
Studying the stable marriage problem with ties combined with restricted edges,
we have shown three $\NP$-completeness results. Our computational hardness results naturally lead to the question whether imposing master lists on both sides makes the problems easier to solve. Moreover, it is open whether \textsc{smt-forbidden-1} remains hard in bounded-degree graphs. In addition, one may try to identify relevant parameters for our problems and then decide whether they are fixed-parameter tractable or admit a polynomial-sized kernel with respect to these parameters.

\paragraph{Acknowledgments.} The authors thank David Manlove and Rolf Niedermeier for useful suggestions that improved the presentation of this paper.

\bibliographystyle{abbrv}
\bibliography{mybib}
\end{document}